\newtheorem{theorem}{Theorem}
\title{Reinforcement Quantum Annealing: A Quantum-Assisted Learning Automata Approach}
\author[1,*]{Ramin Ayanzadeh}
\author[1]{Milton Halem}
\author[1]{Tim Finin}
\affil[1]{Department of Computer Science and Electrical Engineering, University of Maryland, Baltimore County, Baltimore, MD 21250, United States}
\affil[*]{ayanzadeh@umbc.edu}
\keywords{Boolean Satisfiability, Learning Automata, Quantum Annealing, Reinforcement Learning}
\begin{abstract}
We introduce the reinforcement quantum annealing (RQA) scheme in which an intelligent agent interacts with a quantum annealer that plays the stochastic environment role of learning automata and tries to iteratively find better Ising Hamiltonians for the given problem of interest. 
As a proof-of-concept, we propose a novel approach for reducing the NP-complete problem of Boolean satisfiability (SAT) to minimizing Ising Hamiltonians and show how to apply the  RQA for increasing the probability of finding the global optimum. 
Our experimental results on two different benchmark SAT problems (namely factoring pseudo-prime numbers and random SAT with phase transitions), using a D-Wave 2000Q quantum processor, demonstrated that RQA finds notably better solutions with fewer samples, compared to state-of-the-art techniques in the realm of quantum annealing.
\end{abstract}
\begin{document}
\flushbottom
\maketitle
\thispagestyle{empty}

\section*{Introduction}
Quantum artificial intelligence and quantum machine learning are emerging fields that leverage quantum information processing to address a certain types of problems, which are intractable in the realm of classical computing \cite{lamata2017basic,biamonte2017quantum,dunjko2018machine}. There are several models for the physical realization of quantum computers \cite{ladd2010quantum}. Among the quantum computing models, adiabatic quantum computers are currently more readily available at user sites due to recent advancements in commercializing programmable quantum annealers by D-Wave Systems \cite{johnson2011quantum}. 

Quantum annealing is a meta-heuristic that uses Quantum-mechanical fluctuations to address discrete optimization problems (i.e., combinatorial optimization problems), which are intractable in the realm of classical computing \cite{finnila1994quantum,kadowaki1998quantum,ohzeki2011quantum,nishimori2017exponential}. Quantum annealers are a type of adiabatic quantum computer that provides a hardware implementation for minimizing Hamiltonians whose ground states represent optimum solutions of the original problems of interest. The D-Wave quantum annealer is a programmable (and commercialized) Ising processing unit (IPU) that can minimize the  transverse Ising Hamiltonian or its equivalent quadratic unconstrained binary optimization (QUBO) form \cite{johnson2011quantum}. More precisely, the D-Wave quantum annealer receives coefficients of an Ising Hamiltonian (here $\mathbf{h}$ and $J$) as an executable quantum machine instruction (QMI) and returns the vector $\mathbf{z}$ that minimizes the following energy function: 
\begin{equation}
	\label{eqn:ising_energy}	
	E_{\mathrm{Ising}}{(\mathbf{z})} = \sum_{i=1}^{N}{\mathbf{h}_i\mathbf{z}_i} + \sum_{i=1}^{N}{\sum_{j=i+1}^{N}{J_{ij}\mathbf{z}_i\mathbf{z}_j}},
\end{equation}
where ${N}$ denotes the number of quantum bits (qubits) and $\mathbf{z}_i \in \{-1, +1\}$. To solve a problem on a D-Wave quantum annealer, one needs to define an Ising Hamiltonian, shown in Eq. \eqref{eqn:ising_energy}, whose ground state represents a solution for the original problem of interest \cite{ayanzadeh2019an}. 

The current generation of the D-Wave quantum annealer (the Chimera architecture) includes more than 2,000 qubits and about 6,000 couplers, while the next generation (the Pegasus topology of the Advantage) will include more than 5,000 qubits and about 40,000 couplers \cite{boothby2019next}. Recent studies have revealed the potential of quantum annealers (namely the D-Wave quantum processors) to address certain classes of real-world problems that are intractable in the realm of classical computing \cite{biswas2017nasa}—including, but not limited to, planning \cite{rieffel2015case}, scheduling \cite{venturelli2015quantum,tran2016hybrid}, discrete optimization problems \cite{bian2014discrete}, constraint satisfaction problems \cite{bian2016mapping}, Boolean satisfiability \cite{su2016quantum,bian2017solving}, matrix factorization \cite{o2018nonnegative}, cryptography \cite{peng2019factoring}, fault detection and system diagnosis \cite{perdomo2015quantum}, compressive sensing \cite{ayanzadeh2019quantum}, control of automated vehicles \cite{ohzeki2018control} and protein folding \cite{perdomo2012finding}. In addition, by sampling from high-dimensional probability distributions, one can use the D-Wave quantum annealers for many applications in artificial intelligence, machine learning and signal processing \cite{biswas2017nasa,adachi2015application,vinci2019path}. 

Beside all aforementioned applications, the D-Wave quantum annealer architecture has limitations that not only restrict the process of mapping problems into an executable QMI (namely the sparse connectivity) but also lower the quality of results. Reducing a problem to an Ising Hamiltonian that represents the solution of the problem in its ground state does not guarantee that executing the corresponding QMI on a D-Wave quantum processing unit (QPU) will attain the optimum solution. For a given QMI, the D-Wave QPU draws samples from a problem-dependent pseudo-Boltzmann distribution at cryogenic temperatures \cite{biswas2017nasa}. The energy value of samples from the D-Wave QPU follow a Gaussian distribution. Thus, when we increase the number of reads/samples, we expect that the average parameter in the corresponding Gaussian distribution to approach the ground state energy of the corresponding Ising Hamiltonian—i.e., the probability of finding the global minimum approaches one. There are several drawbacks, nevertheless, that prevent quantum annealers from attaining a global minimum—including, but not limited to, confined anneal time, coefficients’ range and precision limitations, noise, and decoherence.

From a quantum computing perspective, an adiabatic quantum computer needs to minimize a non-stoquastic Hamiltonian in order to be universal (which would make them equivalent to gate models); nevertheless, the D-Wave QPU minimizes an Ising Hamiltonian which is stoquastic \cite{nishimori2017exponential,vinci2017non}.
Since coupling every qubit to every other qubit in a quantum annealer is impractical, the D-Wave QPU has a sparse structure/architecture. Hence, we entangle multiple qubits to represent virtual qubits with higher connectivity. Chaining physical qubits substantially reduces the capacity of QPUs—e.g., 2,048 qubits in the Chimera architecture is equivalent to a clique of size 64.
It is possible to implicitly leverage the capacity of the current D-Wave QPUs \cite{okada2019improving}, albeit executing multiple QMIs. In addition, virtual qubits are vulnerable to breaking—the longer the chains, the higher the probability they break during the annealing process. Although we can remediate broken chains by applying postprocessing methods on classical computers (e.g., voting among the physical qubits on a chain), some chains break because they represent a state with lower energy.

The required anneal time in a quantum annealer to keep the process adiabatic has a reverse exponential relation to the energy gap between the ground state (global minimum) and the first excited state (a state right above the global minimum) \cite{nishimori2017exponential}. In the current generation of the D-Wave QPU, $\mathbf{h}_i \in [-2,+2]$ and $J_{ij} \in [-1,+1]$. Thus, one needs to scale the resulting Ising model, Eq. \eqref{eqn:ising_energy}, by dividing all coefficients with a large-enough positive number to satisfy the QPU hardware constraints. However, scaling the Ising model reduces the energy gap between the ground and the first excited states. As a result, the required annealing time can quickly exceed the maximum possible anneal time on a physical quantum annealer (for example 2,000 micro-seconds on the D-Wave QPUs) and makes the process diabatic, which exponentially reduces the probability of getting to the ground state \cite{nishimori2017exponential}.

The current generation of the D-Wave QPUs use 8–9 bits for representing coefficients in Eq. \eqref{eqn:ising_energy}. Hence, the D-Wave QPU truncates coefficients of a QMI prior to putting qubits in their superposition, which can result in  the Ising model having a different ground state—compared to the original QMI. Consequently, the D-Wave QPU may solve a different problem whose result is either infeasible or less accurate than the original problem of interest \cite{pudenz2015quantum,dorband2018extending}. Applying preprocessing techniques \cite{pelofske2019optimizing} and classical postprocessing heuristics \cite{dorband2018method} can remarkably enhance the performance of the D-Wave QPU; however, from a problem-solving viewpoint, the D-Wave quantum annealer cannot guarantee achieving a global optimum.

In this study, we view quantum annealers from two different perspectives simultaneously: (i) a meta-heuristic for solving discrete optimization problems that can find very high-quality solutions in near-constant time; and (ii) a physical process that naturally draws samples from a problem-dependent Boltzmann distribution in cryogenic temperatures. Unlike most current research in Quantum artificial intelligence that applies quantum computing models to hard AI problems, in  this paper, we explore how we might apply AI techniques to improve quantum information processing. 
This paper presents the Reinforcement Quantum Annealing (RQA) scheme that leverages the idea of learning automata \cite{narendra2012learning} to iteratively improve the quality of results, attained by the quantum annealers, and implicitly address the aforementioned issues. In this scheme, the agent maps the problem to an executable QMI and interacts with a quantum annealer that plays the role of a stochastic environment in learning automata. As a proof-of-concept, we first introduce a novel approach for reducing Boolean satisfiability (SAT) instances \cite{biere2009handbook} to minimizing Ising Hamiltonians and then demonstrate that adopting the proposed RQA scheme results in notably better solutions.

\section*{Method}
Assume that the given problem $\Pi$ that we aim to solve on a quantum annealer contains a finite set of constraints (components), denoted by $\pi_i$ for $i \in \{1, 2, \dots, M \}$, over the same variables as follows: 
\begin{equation}
	\label{eqn:problem_constraints}
	\Pi := \{\pi_i, \pi_2, \dots, \pi_M \},
\end{equation}
where ${M}$ indicates the number of constraints and our ultimate objective is to find a solution that addresses all  constraints. Let $H_i$ be an Ising Hamiltonian whoes ground state represents a solution for $\pi_i$ and $H_{\Pi}$ be the corresponding Ising Hamiltonian of $\Pi$ that all are acting on same spins (variables). In addition, let $E_{0}^{H_{\Pi}}$ be the ground state energy of $H_{\Pi}$ and $E_0^{H_i}$ be ground state energy of corresponding $H_i$. If there exists $\mathbf{z}$ that puts $H_i$ ($\forall{i}, i \in \{1, 2, \dots, M\}$) in their ground states (i.e., satisfies all constraints in $\Pi$), then $\mathbf{z}$ also puts $H_{\Pi}$ in its ground state \cite{mooney2019mapping}—in other words, 
\begin{equation} 
	\label{eqn:hamiltonians_equality}
	E_0^{H_{\Pi}} = E_0^{H_1} + E_0^{H_2} + \dots + E_0^{H_M}.
\end{equation}
Hence, we can represent $H_{\Pi}$ as follows:
\begin{equation} 
	\label{eqn:hamiltonian_composition}
	H_{\Pi} = H_1 + H_2 + \dots + H_M.
\end{equation}
This setting appears in a vast range of problem formulations—including, but not limited to SAT \cite{bian2017solving}, constraint satisfaction problems \cite{vyskocil2018simple,vyskovcil2019embedding}, planning and scheduling \cite{rieffel2015case,venturelli2015quantum,tran2016hybrid}, fault detection and diagnosis \cite{perdomo2015quantum,bian2016mapping}, and compressive sensing \cite{ayanzadeh2019quantum,mousavi2019survey}—specifically when we adopt the idea of penalty methods to reduce problems of interest to minimizing Ising Hamiltonians.

\begin{theorem}
	\label{thm:infinite_Ising}
	For any problem in class NP, there are infinite different Ising models whose ground states are all identical to the solution of the original problem. 
\end{theorem}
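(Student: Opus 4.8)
The plan is to start from the single Ising model already furnished by the construction in Eqs.~\eqref{eqn:problem_constraints}--\eqref{eqn:hamiltonian_composition} and then manufacture an infinite family of distinct models that all share its ground state. First I would fix a reduction of the given NP problem $\Pi$ to one Ising Hamiltonian $H_\Pi = H_1 + \dots + H_M$ whose ground state encodes the solution; the existence of at least one such $H_\Pi$ is exactly the standard NP-to-Ising reducibility underlying the decomposition above. The task is therefore not to exhibit a model, but to generate infinitely many that are genuinely different as coefficient tuples $(\mathbf{h}, J)$ yet are all minimized by the same spin configuration $\mathbf{z}^\star$.

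The simplest source of this freedom is positive rescaling. For any scalar $\alpha > 0$ the model $\alpha H_\Pi$ has coefficients $(\alpha\mathbf{h}, \alpha J)$, and because $E_{\mathrm{Ising}}(\mathbf{z}) \le E_{\mathrm{Ising}}(\mathbf{z}')$ holds if and only if $\alpha E_{\mathrm{Ising}}(\mathbf{z}) \le \alpha E_{\mathrm{Ising}}(\mathbf{z}')$ for $\alpha > 0$, the minimizing configuration, and hence the ground state, is left unchanged. Distinct values of $\alpha$ produce distinct coefficient tuples, so $\{\alpha H_\Pi : \alpha > 0\}$ is already an infinite (indeed uncountable) family, which settles the statement unconditionally.

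A second, more structural family exploits the additive decomposition recorded in Eq.~\eqref{eqn:hamiltonians_equality}: consider the reweighted Hamiltonian $\sum_{i=1}^{M}\lambda_i H_i$ with each $\lambda_i > 0$. Here I would argue that whenever $\Pi$ admits a satisfying assignment, the configuration $\mathbf{z}^\star$ satisfying every constraint simultaneously drives each $H_i$ to its own ground-state energy $E_0^{H_i}$ and therefore minimizes every positive combination $\sum_i \lambda_i H_i$ term by term. Since $\boldsymbol{\lambda}$ ranges over the positive orthant and varying even a single $\lambda_i$ changes the coefficients, this yields a continuum of pairwise-distinct, ground-state-preserving penalty models, tying the result directly to the penalty-method formulation that the RQA scheme later tunes.

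The step needing the most care is the ground-state-preservation claim for the reweighted family, since, unlike pure scaling, it is not automatic: it rests on $\mathbf{z}^\star$ being a common minimizer of all the $H_i$, which is precisely the content of Eq.~\eqref{eqn:hamiltonians_equality} under the assumption that $\Pi$ has a solution. Once that observation is in hand, distinctness and infinitude are immediate counting remarks. I would lead with the scaling family for its unconditional simplicity and present the reweighted family second to highlight the structural degrees of freedom the learning agent exploits.
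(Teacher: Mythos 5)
Your proposal is correct and its first argument (positive rescaling of all coefficients, which preserves the minimizer and yields an infinite family of distinct coefficient tuples) is exactly the paper's own proof, modulo the paper invoking Cook--Levin to obtain the initial Ising model. The second, reweighted family $\sum_i \lambda_i H_i$ is a valid bonus that goes beyond what the paper proves, though it requires the satisfiability assumption you correctly flag.
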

\begin{proof}
	According to Cook—Levin theorem, we can reduce any NP problem to minimizing an Ising Hamiltonian (which is also in the class NP) in polynomial-time \cite{garey2002computers}. Multiplying all coefficients of the Ising model by a positive non-zero real number results in a new Ising model whose ground state will be identical to the original Ising model. Since the number of positive real numbers are infinite, we can generate infinite different Ising models whose ground states represent the solution for the original problem of interest. 
\end{proof}

According to Theorem \ref{thm:infinite_Ising}, there are infinite different Ising Hamiltonians whose ground states all represent the solution of the original problem of interest—nevertheless owing to the range and precision limitations on the D-Wave QPUs, we have finite different Ising models for a problem. In theory, these different Ising models are equivalent to each other—i.e., an adiabatic annealing process always attains the ground state which is identical for all corresponding Ising models of a problem. In practice, however, each of these (theoretically) equivalent Ising models are analogous to a pseudo-Boltzman distribution whose parameters are different. Consequently, when we minimize the corresponding QMIs with a physical quantum annealer (like the D-Wave QPU), the probability of finding the global minimum for different  Ising Hamiltonians of a given problem varies from zero to one. As an example, an annealing process on a D-Wave QPU may become diabatic because the required anneal time exceeds the maximum possible anneal time (2,000 micro-seconds), which can substantially reduce the probability of finding the global minimum. Note that for a given Ising Hamiltonian, we cannot estimate the probability of finding the ground state prior to executing the corresponding QMI. 

We introduce a novel scheme, called reinforcement quantum annealing (RQA),  in which an intelligent agent interacts with a quantum annealer as a stochastic environment and tries to iteratively find better Ising Hamiltonians that sampling from their ground state(s), by quantum annealers, results in a better distribution—i.e., the probability of finding the global optimum is increased over the time.
To this end, we extend Eq. \eqref{eqn:hamiltonian_composition} as follows:
\begin{equation} 
	\label{eqn:RQA_ext_h}
	\widetilde{H}_{\Pi} = \widetilde{H}_{1} + \widetilde{H}_{2} + \dots + \widetilde{H}_{M},
\end{equation}
such that: 
\begin{equation} 
	\label{eqn:RQA_chi}
	\widetilde{H}_{i} = \chi \left({H_i, \boldsymbol{\rho}_i} \right),
\end{equation}	
where $\boldsymbol{\rho}_i \in \mathbb{R}$ denotes the impact (or influence) factor of $\pi_i$ (or $H_i$) and $\chi$ is a function that maps the input Hamiltonian to a different Hamiltonian which satisfies:
\begin{itemize} 
	\item any $\mathbf{z}$ that puts $H_i$ in its ground state also puts $\widetilde{H}_i$ in its ground state, and vice versa;
	\item if $\boldsymbol{\rho}_i^1 < \boldsymbol{\rho}_i^2$ then $\chi \left({H_i, \boldsymbol{\rho}_i^1}\right) \ge \chi \left({H_i, \boldsymbol{\rho}_i^2}\right)$.
\end{itemize}

Learning automata (LA) \cite{narendra2012learning} are adaptive decision-making models (i.e., type of reinforcement learning \cite{kaelbling1996reinforcement}) that try to maximize the accumulative reward when they are interacting with stochastic environments. In a similar manner to reinforcement learning, LA use Markov-decision-processes for representing the automaton-environment structure \cite{kaelbling1996reinforcement,narendra2012learning,sutton2018reinforcement}. In a learning automaton, the agent has a set of ${r}$ actions (denoted by ${\alpha} = \{ {\alpha}_1, {\alpha}_2, \dots, {\alpha}_r \}$) and each action has a corresponding probability (denoted by $\boldsymbol{p}_i$ and $\sum{\boldsymbol{p}_i}=1$). In each episode, the agent takes (applies) the action ${\alpha}_i$ (according to $\boldsymbol{p}$), and (correspondingly) the stochastic environment returns its feedback $\beta$ that specifies the performance evaluation of the action ${\alpha}_i$. The agent uses this feedback to learn from the environment and aims to take optimal actions over the time. For $\beta \in [0,1]$ —so-called S-Type learning automata—in episode ${t}$, if the agent takes the action ${\alpha}_i$ and receives the feedback $\beta^{t}$, we can update $\boldsymbol{p}$ as follows:
\begin{equation}
	\label{eqn:la_singletask_update}
	\boldsymbol{p}_j^{t+1} = 
	\begin{cases}
		\boldsymbol{p}_j^{t} -\theta_2 \left({1-\beta^{t}}\right) \boldsymbol{p}_j^{t} + \theta_1 \beta^{t} \left({1-\boldsymbol{p}_j}\right), \quad \quad  & i=j;\\
		\boldsymbol{p}_j^{t} + \theta_2 \left({1-\beta^{t}}\right) \left({\frac{1}{r-1}-\boldsymbol{p}_i^{t}}\right) - \theta_1 \beta^{t} \boldsymbol{p}_{j}, \quad \quad  & i \ne j,
	\end{cases}
\end{equation}
where $\beta=0$ represents the lowest action performance and $\beta=1$ represents the highest action performance, $\theta_1 , \theta_2 \in [0,1]$ are learning factors, and $i,j \in \{1, 2, \dots, r \}$ \cite{narendra2012learning}.  

We extend learning automata to allow the agent to take multiple actions in each episode. Let $\hat{{\alpha}}^t \subset {\alpha}$ denotes the set (list) of actions that the agents takes in episode ${t}$. We can extend the Eq. \eqref{eqn:la_singletask_update} as follows:
\begin{equation}
	\label{eqn:la_multitask_update}
	\boldsymbol{p}_j^{t+1} = 
	\begin{cases}
		\boldsymbol{p}_j^{t} -\theta_2 \left({1-\beta^{t}}\right) \boldsymbol{p}_j^{t} + \theta_1 \beta^{t} \left({1-\boldsymbol{p}_j}\right), \quad \quad  & {\alpha}_j \in \hat{{\alpha}}^{t};\\
		\boldsymbol{p}_{j}^{t} - \theta_2 \left({1-\beta^{t}}\right) \left({\frac{1}{r-\hat{r}}- \hat{p}_{\hat{\boldsymbol{\alpha}}}^{t}}\right) - \frac{\theta_1 \beta^{t}}{r-\hat{r}} \left({\hat{r} - \hat{p}_{\hat{\boldsymbol{\alpha}}}^{t}}\right), \quad \quad  & {\alpha}_j \notin \hat{{\alpha}}^{t},
	\end{cases}
\end{equation}
where $\hat{r} = \left|{\hat{\alpha}^t} \right|$ and, 
\[
	\hat{p}_{\hat{\boldsymbol{\alpha}}}^t = \sum{\boldsymbol{p}}_i, \quad \quad \text{for} \; {\alpha}_i \in \hat{{\alpha}}^{t}.
\]

Finally, we leverage multi-task learning automata—let $\boldsymbol{\rho}_i = \boldsymbol{p}_i$ and $M=r$—to propose the RQA scheme. RQA is an iterative process that we can start it with a uniform distribution of influence factors—i.e., $\boldsymbol{\rho}= \{\frac{1}{M} \}^M$. In each iteration, the agent applies Eq. \eqref{eqn:RQA_ext_h} and submits the corresponding QMI to a quantum annealer.
After performing the necessary post-processing methods (like remediating broken-chains and applying post-quantum error correction heuristics), the agent estimates $\beta$ according to the number of satisfied constraints ($\pi_i$) and employs Eq. \eqref{eqn:la_multitask_update} to update the influence factor $\boldsymbol{\rho}$.

\subsection*{Proof of Concept: Solving SAT Instances}
For a given Boolean formula $f\left(\mathbf{x}_1, \mathbf{x}_2, \dots, \mathbf{x}_n\right)$, the problem of Boolean satisfiability (SAT) determines whether a constant replacement of values (“True” or “False”) for all Boolean variables can interpret ${f}$ as “True” \cite{biere2009handbook}. From a complexity perspective, SAT is NP-complete, and we can reduce all problems of class NP to SAT in polynomial-time \cite{garey2002computers}. The Boolean formula ${f}$ is in conjunctive normal form (CNF) if it is a conjunction (“AND”) of clauses (i.e., $f(\mathbf{x}) = C_1 \wedge C_2 \wedge \dots \wedge C_M$), where each clause is a disjunction (“OR”) of literals (a Boolean variable or its negation)—$C_i = \mathbf{l}_i \vee \mathbf{l}_j \dots \vee \mathbf{l}_k$. The maximum satisfiability problem (MAX-SAT) is an NP-hard extension of the SAT problem that aims to maximize the number of satisfying clauses \cite{biere2009handbook,ayanzadeh2019sat}.

First, we adopt the idea of penalty methods to propose a novel heuristic that maps the problem of SAT to a QMI that is executable by the D-Wave quantum processors. In other words, we aim to find coefficients of Eq. \eqref{eqn:ising_energy} such that the ground state of the resulting Ising Hamiltonian represents the satisfying solution of the original SAT instance. In this formulation, $\mathbf{z}_i$ represents the Boolean variable $\mathbf{x}_i$, and we interpret -1 and +1 as “False” and “True”, respectively. For a clause with ${k}$ literals, there are $2^k$ different possibilities among which, we can distinguish the only state that makes the clause to be false—called infeasible state. Hence, we represent each clause of the given SAT instance with two inequalities as:
\begin{equation}
	\label{eqn:cnf2ising_infeasible}
	D_i \leq E_{\mathrm{infeasible}},
\end{equation}
and,
\begin{equation}
	\label{eqn:cnf2ising_feasible}
	D_i \ge \sum{E_{\mathrm{feasible}}},
\end{equation}
where $D_i \in \mathbb{R}$ is the  boundary variable corresponding to the clause $C_i$, AND $E_{\mathrm{feasible}}$ and $E_{\mathrm{infeasible}}$ represent the contribution of the feasible and infeasible states in the ultimate energy function, respectively. For $C_i = \mathbf{x}_1 \vee \neg \mathbf{x}_4 \vee \mathbf{x}_9$, as an example, Eq. \eqref{eqn:cnf2ising_infeasible} reduces to: 
\begin{align*}
	D_i \leq &-\mathbf{h}_1 + \mathbf{h}_4 -\mathbf{h}_9 -J_{1,4} + J_{1,9} -J_{4,9},
\end{align*}
and we can represent Eq. \eqref{eqn:cnf2ising_feasible} as follows:
\begin{align*}
	D_i \ge &-\mathbf{h}_1 -\mathbf{h}_4 -\mathbf{h}_9 +J_{1,4} + J_{1,9} + J_{4,9}\\
		&-\mathbf{h}_1 -\mathbf{h}_4 + \mathbf{h}_9 +J_{1,4} -J_{1,9}-J_{4,9}\\
		&-\mathbf{h}_1 +\mathbf{h}_4 + \mathbf{h}_9 -J_{1,4} -J_{1,9} + J_{4,9}\\
		&+\mathbf{h}_1 -\mathbf{h}_4 -\mathbf{h}_9 - J_{1,4} -J_{1,9} + J_{4,9}\\
		&+\mathbf{h}_1 -\mathbf{h}_4 +\mathbf{h}_9 -J_{1,4} +J_{1,9} -J_{4,9}\\
		&+\mathbf{h}_1 + \mathbf{h}_4 -\mathbf{h}_9 +J_{1,4} -J_{1,9} -J_{4,9}\\
		&+\mathbf{h}_1 + \mathbf{h}_4 +\mathbf{h}_9 +J_{1,4} + J_{1,9} + J_{4,9}.
\end{align*}
A clause with ${k}$ literals includes $2^k - 1$ different feasible states so the size of Eq. \eqref{eqn:cnf2ising_feasible} grows exponentially with ${k}$. 
\begin{theorem}
	\label{thm:ising_sum_eng}
	Sum of the energy values for all possible states in every Ising model is zero.
\end{theorem}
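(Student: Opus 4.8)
The plan is to sum the energy function $E_{\mathrm{Ising}}$ of Eq.~\eqref{eqn:ising_energy} directly over all $2^N$ spin configurations $\mathbf{z} \in \{-1,+1\}^N$ and to show that every individual term of the Hamiltonian cancels under this summation. First I would invoke linearity of the (finite) sum to write
\begin{equation*}
	\sum_{\mathbf{z} \in \{-1,+1\}^N} E_{\mathrm{Ising}}(\mathbf{z}) = \sum_{i=1}^{N} \mathbf{h}_i \left(\sum_{\mathbf{z}} \mathbf{z}_i\right) + \sum_{i=1}^{N}\sum_{j=i+1}^{N} J_{ij}\left(\sum_{\mathbf{z}} \mathbf{z}_i \mathbf{z}_j\right),
\end{equation*}
so that the claim reduces to showing that each inner configuration-sum vanishes, independently of the coefficients $\mathbf{h}_i$ and $J_{ij}$.

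The key step is the observation that the sum over all configurations factorizes coordinate-by-coordinate. For a linear term, summing $\mathbf{z}_i$ over its two values while the remaining $N-1$ spins range freely gives $\sum_{\mathbf{z}} \mathbf{z}_i = 2^{N-1}\bigl[(+1)+(-1)\bigr] = 0$. For a quadratic term with $i \neq j$, the same factorization applies: since $\mathbf{z}_i$ and $\mathbf{z}_j$ are distinct coordinates, $\sum_{\mathbf{z}} \mathbf{z}_i \mathbf{z}_j = \left(\sum_{\mathbf{z}_i} \mathbf{z}_i\right)\left(\sum_{\mathbf{z}_j}\mathbf{z}_j\right) 2^{N-2} = 0$, each factor being already zero. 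Equivalently, I could phrase this probabilistically: the total sum equals $2^N$ times the expected energy under the uniform distribution on $\{-1,+1\}^N$, and since the spins are i.i.d.\ uniform we have $\mathbb{E}[\mathbf{z}_i]=0$ and $\mathbb{E}[\mathbf{z}_i\mathbf{z}_j]=\mathbb{E}[\mathbf{z}_i]\,\mathbb{E}[\mathbf{z}_j]=0$ for $i\neq j$, so every term of $\mathbb{E}[E_{\mathrm{Ising}}]$ vanishes.

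Because both the linear monomial $\mathbf{z}_i$ and the quadratic monomial $\mathbf{z}_i\mathbf{z}_j$ contain at least one spin whose two-valued sum is zero, every term of the expansion contributes nothing and the total is $0$ regardless of $\mathbf{h}$ and $J$. The argument being elementary, I do not anticipate a genuine obstacle; the only point requiring care is the combinatorial bookkeeping in the factorization—specifically, ensuring that the range $j = i+1,\dots,N$ forces $i \neq j$, so that the two coordinates are genuinely independent and $\sum_{\mathbf{z}}\mathbf{z}_i\mathbf{z}_j$ vanishes rather than collapsing to $\sum_{\mathbf{z}}\mathbf{z}_i^2 = 2^N$. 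I would also note that the identity hinges on Eq.~\eqref{eqn:ising_energy} carrying no constant (spin-independent) offset; had a term $c$ been present, the sum would instead equal $2^N c$. As a final sanity check, the global spin-flip symmetry $\mathbf{z}\mapsto -\mathbf{z}$ pairs configurations so as to cancel the linear part by inspection, corroborating at least half of the statement independently of the factorization argument.
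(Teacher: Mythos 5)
Your proof is correct and follows essentially the same route as the paper: both expand the sum by linearity and observe that each linear and quadratic monomial equals $+1$ on exactly half of the $2^N$ configurations and $-1$ on the other half, so every term cancels independently of $\mathbf{h}$ and $J$. If anything, your coordinate-wise factorization handles the quadratic terms more carefully than the paper's appeal to closure under the global flip $\mathbf{z}\mapsto-\mathbf{z}$, which---as your final remark implicitly notes---only cancels the linear part directly, since $\mathbf{z}_i\mathbf{z}_j$ is invariant under that flip.
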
	
\begin{proof}
	Let ${Z}$ denotes the set of all possible states in Eq. \eqref{eqn:ising_energy}. Because spins in the Ising model (here $\mathbf{z}_I$) take their values from $\{-1,+1\}$, ${Z}$ is a closed set under the complement operation (i.e., $\mathbf{z},-\mathbf{z} \in Z$), and $|Z|=2^N$. Accordingly, sum of the energy values for all possible states in the Ising model is: 
\begin{align*}
	\sum_{\mathbf{z} \in Z}{E_{\mathrm{Ising}}{(\mathbf{z})}} &= \sum_{\mathbf{z} \in Z}{\left({\sum_{i=1}^{N}{\mathbf{h}_i\mathbf{z}_i} + \sum_{i=1}^{N}{\sum_{j=i+1}^{N}{J_{ij}\mathbf{z}_i\mathbf{z}_j}}}\right)} =\sum_{i=1}^{N}{\left({\frac{2^{N}}{2}\mathbf{h}_i + \frac{2^{N}}{2}\left(-\mathbf{h}_i\right)}\right)} + \sum_{i=1}^{N}{\sum_{j=i+1}^{N}{\left( {\frac{2^{N}}{2}J_{ij} + \frac{2^{N}}{2}\left(-J_{ij}\right)}\right)}} = 0.
\end{align*}
\end{proof}
According to Theorem \ref{thm:ising_sum_eng}, we can rewrite Eq. \eqref{eqn:cnf2ising_feasible} as follows: 
\[
	D_i \ge -E_{\mathrm{infeasible}},
\]
that obeys:
\begin{equation} 
	\label{eqn:cnf2ising_feasible_simplified}
	0 \le D_i.
\end{equation}
Note that clauses in CNF representation are connected with “AND” operator. Hence, after representing each clause of the SAT with two inequalities, Eq. \eqref{eqn:cnf2ising_infeasible} and \eqref{eqn:cnf2ising_feasible_simplified}, we aggregate the resulting sub-systems of inequalities to form a larger system of inequalities. After representing the given SAT instance with $M$ clauses with a system of $2M$ inequalities, we represent the D-Wave hardware restrictions through embedding:
\begin{equation}
\label{eqn:dwave_hrange}
	-2 \le \mathbf{h}_i \le +2,
\end{equation} 
and, 
\begin{equation}
	\label{eqn:dwave_jrange}
	-1 \le J_{ij} \le +1,
\end{equation}
where $i,j \in \{1, 2,  \dots, N\}$ and $i<j$. Finally, we solve the following objective function: 
\begin{equation}
	\label{eqn:cnf2ising_objective}
	\max_{\mathbf{h}, J, D} \; {\sum_{i=1}^{M}{D_i}},
\end{equation}
to obtain coefficients of the Ising Hamiltonian, shown in Eq. \eqref{eqn:ising_energy}, that is executable by a D-Wave QPU. Note that inequalities\eqref{eqn:cnf2ising_infeasible}, \eqref{eqn:cnf2ising_feasible_simplified}, \eqref{eqn:dwave_hrange} and \eqref{eqn:dwave_jrange} are linear. Thus, the objective function in Eq. \eqref{eqn:cnf2ising_objective} is tractable by linear programming and convex optimization techniques. Considering that biases and couplers on a D-Wave QPU are bounded, Eq. \eqref{eqn:dwave_hrange} and \eqref{eqn:dwave_jrange}, problem \eqref{eqn:cnf2ising_objective} will always converge. 

To adopt the proposed RQA scheme for addressing the NP-complete problem of SAT, we rewrite the inequality \eqref{eqn:cnf2ising_feasible_simplified} as follows:
\begin{equation}
	\label{eqn:cnf2ising_rqa}
	\boldsymbol{\rho}_i  \leq D_i,
\end{equation}
where $\boldsymbol{\rho}_i$ denotes the influence factor of the clause $C_i$. Here, we define $\boldsymbol{\rho}$ as:
\begin{equation}
	\label{eqn:cnf2ising_rqa_rho}
	\boldsymbol{\rho}_i = \frac{1}{M}-p_i,
\end{equation}
where $p_i$ is the corresponding probability of constraint $\pi_i$ (here the clause $C_i$) in Eq. \eqref{eqn:la_multitask_update}. Note that when $\boldsymbol{p}_i = \frac{1}{M}$,  the inequalities \eqref{eqn:cnf2ising_feasible_simplified} and \eqref{eqn:cnf2ising_rqa} are identical. 

The architecture of the proposed agent contains the following components: 
\begin{itemize}
	\item $\Phi^t$—set of unsatisfied clauses in episode ${t}$;
	\item $\boldsymbol{\rho}^t$—tuple of ${M}$ influence factors in episode ${t}$;
	\item $\mathrm{QMI}^t$—action of the agent in episode ${t}$, the Ising Hamiltonian for solving the given SAT instance (according to $\Phi^t$ and $\boldsymbol{\rho}^t$);
	\item $\mathbf{z}$—perception of the agent from the stochastic environment, resulting sample(s) from executing the $\mathrm{QMI}^{t}$ on a quantum annealer. 
\end{itemize}
For a given Boolean formula in CNF, the agent initializes its internal state as:
\[
	\Phi^0 = \emptyset,
\]
\[
	\boldsymbol{p}^0 = \{\frac{1}{M}\}^M.
\]
In each episode, the agent forms a system of inequalities with Eq. \eqref{eqn:cnf2ising_infeasible} and \eqref{eqn:cnf2ising_rqa}, and embeds Eq. \eqref{eqn:dwave_hrange} and \eqref{eqn:dwave_jrange}. Afterward, the agent solves problem \eqref{eqn:cnf2ising_objective}, and submits the resulting Ising Hamiltonian ($\mathrm{QMI}^t $) to a D-Wave QPU. The environment (here the D-Wave QPU) draws sample(s) from the corresponding pseudo-Boltzmann distribution, and returns the resulting sample(s).
The episode ends with updating the internal state of the agent as follows: 
\[
	\Phi^{t} = \text{set of unsatisfied clauses with } \mathbf{z}^t,
\] 
\[
	\beta^t =1 - \frac{\left|{\Phi^{t}}\right|}{M},
\]
and (finally) updating probabilities with \eqref{eqn:la_multitask_update}. In RQA, the action of the agent in episode ${t}$ depends on $\boldsymbol{\rho}_{t-1}$; therefore, Markov property holds here \cite{cox2017theory}.

\section*{Results} 
In this section, we aim to evaluate the performance of RQA scheme on solving benchmark SAT instances, and compare it with recent software and hardware enhancements to the quantum annealers. In this study, we used Z3 (from Microsoft Research) as a framework for symbolic computing implementations in Python \cite{de2008z3} and we executed each QMI on the D-Wave 2000Q quantum annealer (with lower noise), located at Burnaby, British Columbia.

For every SAT instance, we used inequalities \eqref{eqn:cnf2ising_infeasible}, \eqref{eqn:cnf2ising_rqa}, \eqref{eqn:dwave_hrange} and \eqref{eqn:dwave_jrange} to represent the given SAT instance as a system of inequalities. Afterward, we solved problem \eqref{eqn:cnf2ising_objective} for reducing the SAT to an executable QMI on a D-Wave quantum processor. Note that solving problem \eqref{eqn:cnf2ising_objective} will result in an Ising Hamiltonian which is not necessarily compatible with the D-Wave hardware graph (Chimera topology for the current generation). Therefore, we applied the minor-embedding heuristic \cite{cai2014practical} for embedding the problem to the physical lattice of qubits on a D-Wave QPU. To avoid the impact of chaining physical qubits in our evaluations, we employed fixed embeddings of cliques in all instances. 

Recent studies have revealed that using spin-reversal transforms (a.k.a. gauge transforms)—i.e., flipping the qubits randomly without altering the ground state of the original Ising Hamiltonian—can reduce analog errors of the quantum annealers \cite{pelofske2019optimizing}. Thus, as a preprocessing technique, we applied spin-reversal transforms prior to submitting the QMIs to the physical QPU. We also put a delay between measurements to reduce the sample-to-sample correlation, albeit longer run-time. 

To remediate possible broken chains in the resulting raw samples from the D-Wave QPU, we performed voting among the physical qubits of chains. After disembedding samples (i.e., representing variables in the original problem domain), we applied the multi-qubit correction (MQC) heuristic \cite {dorband2018method} which has demonstrated a significant ability to improve the probability of finding the global minimum, attained by the D-Wave QPU. Finally, we performed a local search heuristic, so-called single-qubit correction (SQC), to construct the final solution of the given SAT \cite {dorband2018method}. 

\subsection*{Experiment A: Factoring Pseudo-Prime Numbers}
In number theory, the problem of integer factoring refers to decomposing a composite integer number into the product of smaller integers, and prime factorization restricts these factors to prime numbers. Although there are debates on the class (or complexity) of this problem, there is no known efficient (non-quantum) algorithm for factoring numbers in polynomial-time \cite{balasubramanian2018integer}.

In this study, we use the problem of prime factorization as a benchmark to evaluate the performance of RQA. It is worth noting that our research objective in this paper is not to set a new record for quantum factorized integers, which for the current generation of the D-Wave quantum annealers is 1,005,973 \cite{peng2019factoring}. Indeed, since the security of the modern public-key cryptography systems (like RSA) mainly relies on the difficulty of factoring very large pseudo-prime numbers \cite{dridi2017prime,peng2019factoring}, we relied on the difficulty of prime factorization problem for generating benchmark SAT instances. 
Let $f(\mathbf{x}_1, \mathbf{x}_2)$be a Boolean function as follows:
\begin{equation}
	\label{eqn:factoring_mul}
	\mathbf{q} = f(\mathbf{x}_1, \mathbf{x}_2) = \mathbf{x}_1 \times \mathbf{x}_2,
\end{equation}
where $\mathbf{x}_1 \in \{0,1\}^{n_1}$ and $\mathbf{x}_2 \in \{0,1\}^{n_2}$ are integer-valued numbers in binary representation (here, $\mathbf{x}_1, \mathbf{x}_2 \geq2$), and the multiply operator is in binary base—each element of the vector $\mathbf{q}$ is a Boolean function of $\mathbf{x}_1$ and $\mathbf{x}_2$. Assume that $\hat{\mathbf{q}}$ is a pseudo-prime integer number in binary base (i.e., $\hat{\mathbf{q}}$ has two prime factors. We can map the problem of factoring $\hat{\mathbf{q}}$ to SAT as follows
\begin{equation}
	\label{eqn:factoring_g}
	g = f_{\mathrm{SAT}}\left({\mathbf{q}, \hat{\mathbf{q}}}\right) 
	= \bigwedge_{i=1}^{n}{\neg{\left({\mathbf{q}_i \oplus \hat{\mathbf{q}}_i}\right)}},
\end{equation}
where $n=n_1 + n_2$ denotes the length of $\mathbf{q}$. We can look at the process of generating SAT instances from a reverse-engineering viewpoint. To this end, we generated pseudo-prime numbers via multiplying two prime numbers, and represented them in binary base (denoted by $\hat{\mathbf{q}}$). For each instance, we then used the Eq. \eqref{eqn:factoring_g} to map the factorization of $\hat{\mathbf{q}}$ to a satisfiable Boolean formula. Since ${g}$ is a Boolean expression of ${x}$, we applied the Tseitin transformation \cite{li2019clausal} to represent ${g}$ in CNF. We also performed pre-processing techniques, namely  "ctx-solver-simplify", "recover-01", "propagate-values" and “reduce-args” tactics from Z3 \cite{de2013strategy}. Note that applying the Tseitin transformation can increase the size of ${g}$ linearly, due to defining auxiliary variables. Since the capacity of the current D-Wave 2000Q quantum processors is limited to a complete graph of size 63, we eliminated SAT instances (in CNF) with more than 63 Boolean variables which resulted in 136 satisfiable SAT instances.

Figure \ref{fig:RQA_factoring} illustrates results—minimum (circles), maximum (triangles), average and variance of number of unsatisfied clauses—for solving these 136 satisfiable SAT instances, and compares the performance of the proposed RQA scheme with quantum annealing (QA) and quantum annealing with multiple post-quantum processes (SMQC). To enhance the standard quantum annealing technique, we used two spin-reversal-transforms \cite{pelofske2019optimizing}, as well as the delay between measurements to reduce the inter-sample correlation.
In the second method (SMQC), we first used the multi-qubit correction (MQC) method \cite{dorband2018method}, in problem variable level—which is the state-of-the-art technique in the realm of post-quantum correction for quantum annealers—and then applied a local search to maximize the quality of results, attained by the SMQC arrangement. 

To update influence factors of clauses in RQA, Eq. \eqref{eqn:la_multitask_update}, we used $\theta_1 =0.1$ and $\theta_2 =0$. Learning automata generally require a notable number of episodes to converge to an optimal (or sub-optimal) policies. In this experiment, nevertheless, the agent terminates the process after at most $T=10$ episodes (due to QPU time limitations) or finding a solution that satisfies all clauses. Hence, we formed a hall-of-fame—a set of final solutions from all episodes—and applied MQC (followed by SQC) on them to obtain the ultimate solution of RQA. Our empirical observations showed that this technique can implicitly address the limited number of allowed episodes in RQA. Note that RQA utilizes at most same number of samples as QA and SMQC.

\begin{figure}
	\centering
	\includegraphics{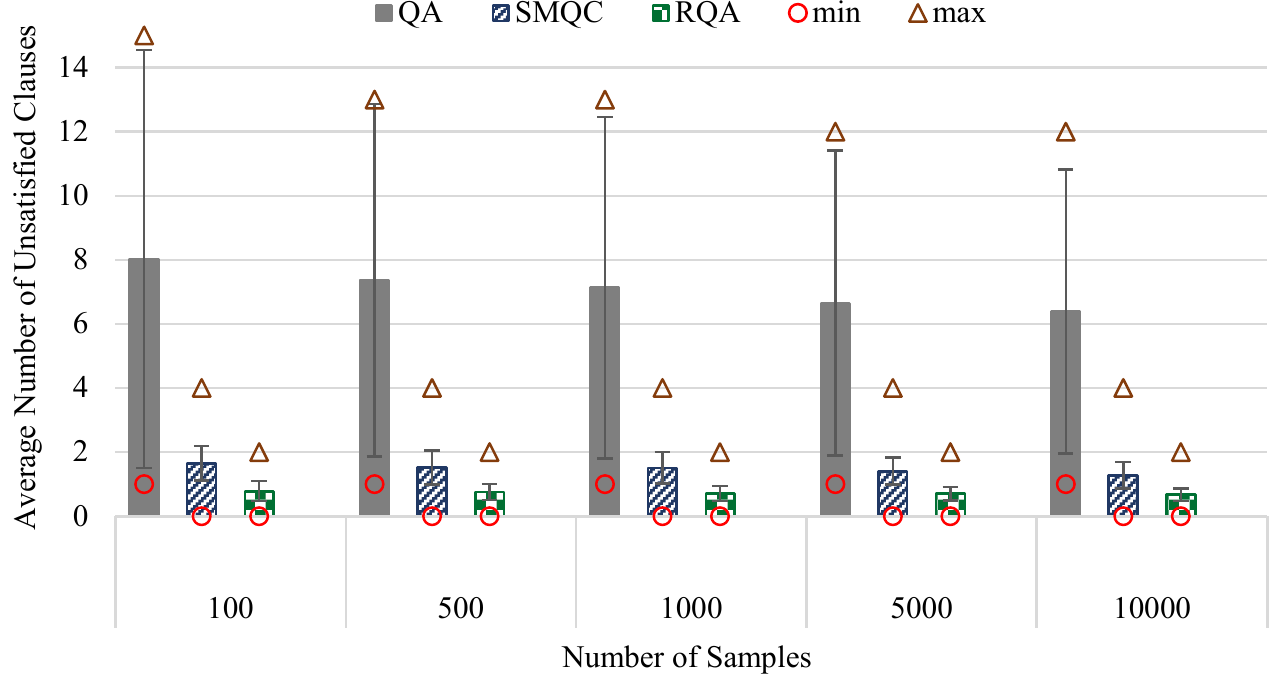}
	\caption{
Experiment results for solving 136 satisfiable SAT instances for factoring pseudo-prime numbers with quantum annealing (QA), quantum annealing with classical post-processing (SMQC) and reinforcement quantum annealing (RQA).
	}
	\label{fig:RQA_factoring}
\end{figure}

\subsection*{Experiment B: Uniform Random 3-SAT with Phase Transitions}
Sampling from the phase transition region of uniform Random 3-SAT is a common practice for generating benchmark SAT (and MAX-SAT) problems \cite{cheeseman1991really,selman1996generating,achlioptas2000generating,nudelman2004understanding}. In this experiment, as our second study case, we used the satisfiable benchmark test-set of uniform random 3-SAT with phase transitions \cite{hoos2000satlib}. Considering the capacity of the current generation of the D-Wave quantum annealers—we can embed a clique of size at most 63 on chimera architecture—we employed the test-set with 50 Boolean variables.

Figure \ref{fig:RQA_ufo} demonstrates results—minimum (circles), maximum (triangles), average and variance of number of unsatisfied clauses—for solving the first 100 instances from the benchmark test-set, and (similar to the previous experiment) compares the performance of the proposed RQA scheme with quantum annealing (QA) and quantum annealing with multiple post-quantum processes (SMQC). The setting for this experiment was identical to the previous experiment, except the number of instances (136 vs. 100) and the number of variables (variant vs. 50). 
\begin{figure}
	\centering
	\includegraphics{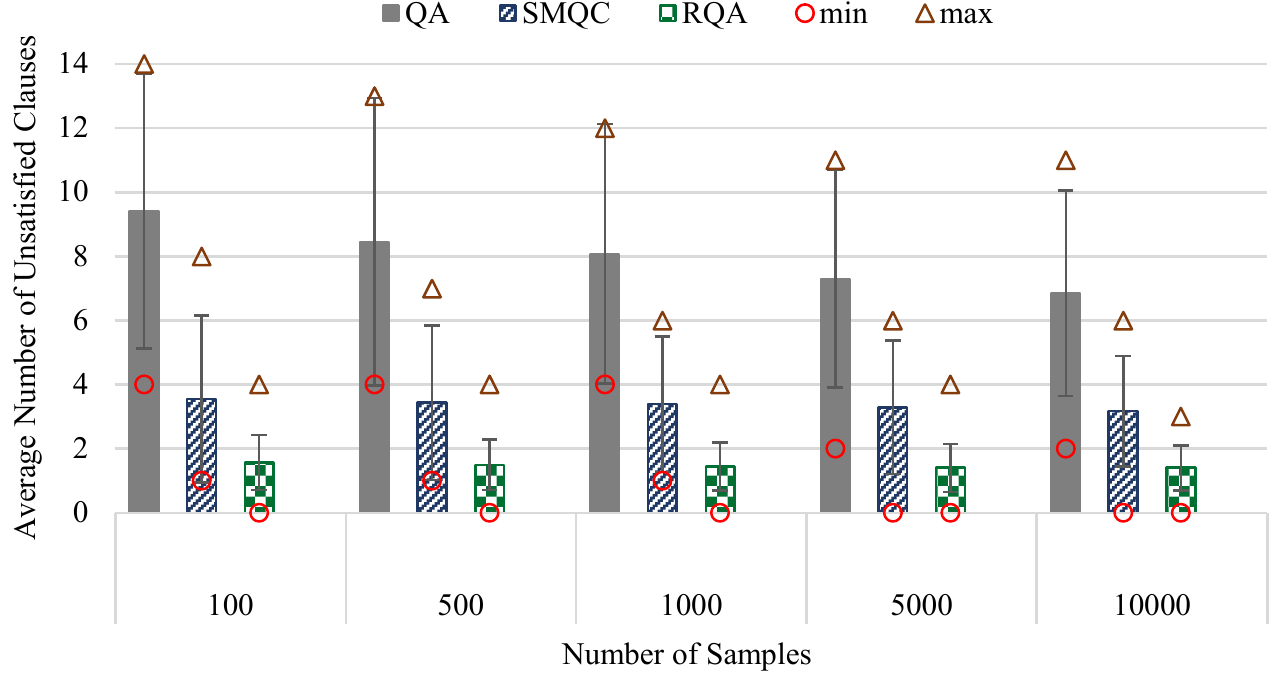}
	\caption{
Experiment results for solving 100 satisfiable uniform random 3-SAT instances with phase transitions using quantum annealing (QA), quantum annealing with classical post-processing (SMQC) and reinforcement quantum annealing (RQA).
	}
\label{fig:RQA_ufo}
\end{figure}

\subsection*{Experiment C: Run-Time Evaluation}
In this experiment, we aim to evaluate the average run-time of RQA scheme and compare it with QA and SMQC approaches. We implemented all experiments in Python 3.7.4, and executed them on a 64-bit Windows 10 based system with 32 GB RAM and Intel Xeon processor at 3.00 GHz. Figure \ref{fig:RQA_runtime} shows the average run-time of solving 100 SAT instances for QA, SMQC and RQA methods on a D-Wave 2000Q quantum processor. Note that, in this experiment, we did not include the computation time for finding the embedding of the QMI on a working graph—we used the pre-defined embeddings of cliques for the chimera architecture. For all instances, we used two spin-reversal transforms and we also enabled the inter-sample delay between samples’ reads.  
\begin{figure}
	\centering
	\includegraphics{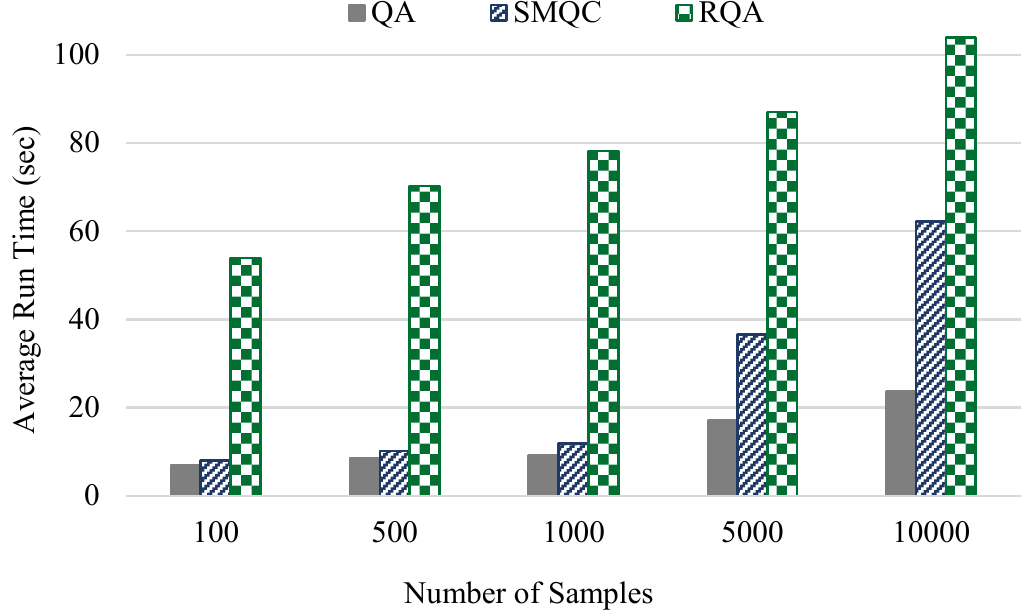}
	\caption{
	Average run-time for solving 100 SAT instances with quantum annealing (QA), quantum annealing with classical post-processing (SMQC) and reinforcement quantum annealing (RQA) approaches on a D-Wave 2000Q quantum processing unit.
}
\label{fig:RQA_runtime}
\end{figure}

\section*{Discussion}
In this study, we introduced a novel scheme—called reinforcement quantum annealing (RQA)—that leverages reinforcement learning (more specifically learning automata) to ENHANCE the quality of results, attained by the quantum annealers.
Ramezanpour (2018) has proposed to improve the simulated quantum annealing algorithm by adding reinforcement to the standard quantum annealing algorithm \cite{ramezanpour2018enhancing}. Simulated quantum annealing is an iterative algorithm (similar to simulated annealing) that is implemented and run on classical computers. On the other hand, quantum annealers are physical, single-instruction, quantum processing units where the entire annealing process is atomic. Thus, we cannot modify or adjust the annealing process after starting the annealing (i.e., putting quantum bits on their superposition). in RQA, similar to the standard reinforcement learning scheme, iterations emulate the interactions between an agent and its environment. Ramezanpour’s method, however, has an adaptive optimization scheme in which iterations simulate one quantum annealing process on a classical computer and is not applicable to physical quantum annealers.

As a proof-of-concept, we proposed a novel method for reducing SAT to minimizing Ising Hamiltonians and then demonstrated that applying the proposed RQA scheme (on a D-Wave quantum annealer) results in notably better solutions. It is worth highlighting that, however, the proposed approach (i.e., hybridization of reinforcement learning and quantum annealing) is applicable to a vast range of classic AI problems like constraint satisfaction, planning, and scheduling. 

We applied the proposed RQA scheme on two different SAT problem sets, and compared its performance with quantum annealing (QA) and quantum annealing with post-quantum error corrections (so-called SMQC) which is the state-of-the-art in the realm of quantum annealing \cite{golden2019pre,ayanzadeh2019quantum_assisted,dorband2018method}.
The first problem set includes 136 satisfiable SAT instances which represent factoring pseudo-prime numbers that have at most 63 Boolean variables, in CNF representation. Besides the length of the given composite numbers, the difficulty of integer factoring also depends on the properties of integer numbers. The hardest instances of this problem are factoring pseudo-prime numbers (product of two prime numbers) whose factors have the same size (in binary base). SAT instances in experiment A are not the hardest cases of prime-factoring— restricting the SAT instances in experiment A to a composite of the same size prime factors resulted in only 8 problems. It is worth noting that our main objective in this experiment was not to address the prime factorization nor to use quantum annealers for solving SAT or MAX-SAT problems.

Figure \ref{fig:RQA_factoring} illustrates results—minimum, maximum, average and variance of a number of unsatisfied clauses—of solving these 136 satisfiable SAT instances for 100, 500, 1,000, 5,000 and 10,000 samples. In RQA, increasing the number of samples from 100 to 10,000 reduces the average number of unsatisfied clauses from 1.57 to 1.40. Similarly, in QA and SMQC, the average number of unsatisfied clauses is reduced from 9.41 and 3.55 to 6.85 and 3.17, respectively. Although increasing the number of samples in all three methods reduces the average number of unsatisfied clauses, RQA with 100 samples outperforms both QA and SMQC approaches in all arrangements (even with 10,000 samples). It is worth highlighting that QA was not able to satisfy all clauses of any of these 136 SAT instances, even when we requested for 10,000 samples, while both SMQC and RQA methods were able to find a satisfying solution for at least one of the instances in all cases. From a robustness viewpoint, increasing the number of samples from 100 to 10,000 lowers the variance and range (the difference between maximum and minimum) of QA, SMQC and RQA from 4.28, 2.60 and 0.86 to 3.21, 1.73 and 0.70, respectively. Therefore, RQA demonstrated significantly better robustness (i.e., higher reproducibility rate), compared to both QA and SMQC approaches. 

In the second study case, shown in Fig. \ref{fig:RQA_ufo}, we used the first 100 SAT instances of the satisfiable benchmark test-set of uniform random 3-SAT with phase transitions \cite{hoos2000satlib}. Similar to the Fig. \ref{fig:RQA_factoring}, Fig. \ref{fig:RQA_ufo} demonstrates that RQA with 100 samples outperforms both QA and SMQC approaches in all settings. More specifically, increasing the number of samples from 100 to 10,000 in QA, SMQC and RQA decreases the average number of unsatisfied clauses from 8.02, 1.65 and 0.79 to 6.39, 1.28 and 0.68, respectively. Also, the variances are reduced from 6.52, 0.53 and 0.31 to 4.43, 0.42 and 0.20, respectively. Note that the minimum number of unsatisfied clauses in RQA for all settings is zero while SMQC needed at least 5,000 samples to satisfy all clauses of at least one SAT instance.

Problem-solving with a D-Wave quantum processor, in practice, requires some preprocesses (e.g., embedding and gauge transforms) and post-processes (like error correction and broken-chain remediation) that extends the total run-time from 20–2000 microseconds (pure annealing time) to some seconds and even minutes. Run-time in many pre/post-processing techniques mainly depends on the number of samples. It is a common practice in applying quantum annealers to request a few thousand (at most 10,000 per QMI) samples to increase the probability of finding the global minimum. 
Figure \ref{fig:RQA_runtime} demonstrates that increasing the number of samples increases the total run-time of RQA at a significantly lower rate (compared to QA and more specifically SMQC)—since a request of 10 times fewer samples in each iteration of RQA yet  has less pre/post-processing overhead. As an illustration, 100 samples (10 samples in each iteration of RQA), increases the total run-time ~7.6x and ~6.8x more than QA and SMQC, respectively. However, increasing the number of samples to 10,000, (1,000 samples in each iteration of RQA) only increases the total run time by ~4.4x and ~1.7x, compared to QA and SMQC, respectively. 

\section*{Acknowledgements}
This research has been supported by NASA grant (\#NNH16ZDA001N-AIST 16-0091), NIH-NIGMS Initiative for Maximizing Student Development Grant (2 R25-GM55036), and the Google Lime scholarship. We would like to thank the D-Wave Systems management team, namely Rene Copeland, for granting access to the D-Wave 2000Q quantum annealer.

\bibliography{biblio}



\end{document}